\newcommand*\samethanks[1][\value{footnote}]{\footnotemark[#1]}
\newcommand{\nph}{\mathsf{NP\text{-}Hard}}
\newcommand{\npc}{\mathsf{NP\text{-}Complete}}
\newcommand{\threesat}{\mathsf{3\text{-}SAT}}
\newcommand{\mcsc}{\mathsf{MCSC}}
\newcommand{\smcsc}{\mathsf{S\text{-}MCSC}}
\newcommand{\lmcsc}{\mathsf{L\text{-}MCSC}}
\newcommand{\apx}{\mathsf{APX\text{-}Hard}}
\newcommand{\remove}[1]{}
\begin{document}

\title{Minimum Color Spanning Circle \\ of Imprecise Points\thanks{A preliminary 
version of this paper  appeared in the Proceedings of the 27th International Conference on Computing and Combinatorics (COCOON 2021).}} 


\author{Ankush Acharyya\inst{1}\thanks{The work was partially done while affiliated with Institute of Computer Science of the Czech Academy of Sciences, with institutional support RVO:67985807.}\and
Ramesh K. Jallu\inst{2}\samethanks[2] \and
Vahideh Keikha\inst{3} \and \\ Maarten L\"{o}ffler\inst{4} \and Maria Saumell \inst{3,5}}

\institute{Dept. of Mathematics and Computing, Indian Institute of Technology (ISM) Dhanbad, India\\ \email{ankush@iitism.ac.in} \\
\and Dept. of Computer Science and Engineering, Indian Institute of Information Technology Raichur, India\\\email{jallu@iiitr.ac.in}\\
\and 
The Czech Academy of Sciences, Institute of Computer Science, Czech Republic\\ \email{keikha@cs.cas.cz}\\
\and
Dept. of Information and Computing Sciences, Utrecht University, the Netherlands\\
\email{M.Loffler@uu.nl}\\
 \and
Department of Theoretical Computer Science, Faculty of Information Technology, Czech Technical University in Prague, Czech Republic\\
\email{maria.saumell@fit.cvut.cz}
}


\authorrunning{A. Acharyya, R.\,K. Jallu, V. Keikha, M. L\"{o}ffler, M. Saumell}

\maketitle

\begin{abstract}
Let $\cal R$ be a set of $n$ colored imprecise points, where each point is colored by one of $k$ colors. Each imprecise point is specified by a unit disk in which the point lies. We study the problem of computing the smallest and the largest possible minimum color spanning circle, among all possible choices of points inside their corresponding disks. We present an $O(nk\log n)$ time algorithm to compute a smallest minimum color spanning circle. Regarding the largest minimum color spanning circle, we show that the problem is $\nph$ and present a $\frac{1}{3}$-factor approximation algorithm. We improve the approximation factor to $\frac{1}{2}$ for the case where no two disks of distinct color intersect.

\keywords{Color spanning circle \and Imprecise points \and Algorithms \and Computational complexity}
\end{abstract}

\section{Introduction}
Recognition of color spanning objects of optimum size, in the classical (precise) setting, is a well-studied problem in the literature~\cite{acharyya2018minimum,de2005tsp,das2009smallest,huttenlocher1993upper}. The motivation of color spanning problems stems from facility location problems. Here facilities of type $i\in \{1,2,\ldots,k\}$ are modeled as points with
color code $i$, and the objective is to identify the location of a desired geometric shape containing at
least one facility of each type such that the desired measure parameter (width, perimeter, area, etc.) is optimized. Other applications of color spanning objects can be found in disk-storage management systems~\cite{ConsuegraN13} and central-transportation systems~\cite{manzini2008design}.

The  simplest type of two-dimensional problem considered in this setup is the {\em minimum color spanning circle} ($\mcsc$) problem, defined as follows.
Given a colored point set $P$ in the plane, such that each point in $P$ is colored with one of $k$ possible colors, compute a circle of minimum radius that contains at least one point of each color (see Fig.~\ref{fig:defn}a). The $\mcsc$ problem is well-understood: As observed  by Abellanas et al.~\cite{abellanas2001smallest}, a minimum color spanning circle can be computed in $O(nk\log n)$ time using results on the upper envelope of Voronoi surfaces obtained by Huttenlocher et al.~\cite{huttenlocher1993upper}. 
 
\begin{figure*}[t]
    \centering
    \includegraphics[width=0.9\linewidth]{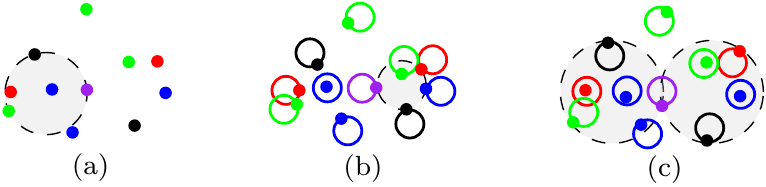}
    \caption{(a) $\mcsc$ for a precise colored point set. (b,c) $\smcsc$ and $\lmcsc$ for an imprecise colored point set. The representative for each disk is marked as a point of the corresponding color. 
    In (c), we display two $\lmcsc$ of the same radius.}
    \label{fig:defn}
\end{figure*}

A limitation of these studies on color spanning objects is that, in many real-life situations, the locations of the points are subject to errors and their exact coordinates are unknown. This is an issue that might arise every time that we try to apply a geometric algorithm to real-world data. For this reason, in recent years there have been many studies aimed at describing how imprecision in the input data might affect some of the most fundamental geometric descriptors, such as the convex hull or the smallest bounding box. In these studies, the input is an {\em imprecise} or {\em uncertain} point set: the location of every point is not uniquely determined, but given by a set of possible locations called its {\em region}~\cite{loffler2010largest,salesin1989epsilon}. One of the fundamental questions to be studied in this setting is to compute the extreme values of the descriptors, that is, the smallest and biggest, e.g., convex hull of the point set, for all possible locations of every point in its region. In this paper, we consider this problem for the minimum color spanning circle.

\subsection{Related work} 

Uncertainty in data is paramount in contemporary geometric computations. In the literature, different variations have been considered  where the regions are modelled as simple geometric objects such as line segments, disks or squares \cite{loffler2010largest,salesin1989epsilon}. 
Computing the smallest circle  intersecting a set of disks or convex regions of total complexity $n$ is called the {\em intersection radius problem}, and can be solved in $O(n)$ time~\cite{jadhav1996optimal}. 
  Robert and Toussaint~\cite{robert1990computational} studied the problem of computing the smallest width corridor intersecting a set of convex regions (disks and line segments) and proposed two $O(n\log n)$ time algorithms, where $n$ is the number of convex regions. L\"{o}ffler and van Kreveld~\cite{loffler2010largest} considered the problem of computing the smallest and largest possible axis-parallel bounding box and circle of a set of regions modelled as circles or squares. Their proposed algorithms have running times ranging from $O(n)$ to $O(n\log n)$. 
  For a set of squares or line segments, computing a placement of points in the regions that maximizes or minimizes the area or the perimeter of the convex hull is studied by the same authors~\cite{LofflerK10}. Some variants are shown to be $\nph$, and the polynomial time algorithms have running times ranging from $O(n \log n)$ to $O(n^{13})$. If the input is a set of disks, a $(1+\varepsilon)$-approximation algorithm for this problem is given also by van Kreveld and L\"{o}ffler~\cite{KreveldL08}.
  Other problems that have been studied in the region-based model are computing a placement to maximize or minimize the diameter on a set of squares or disks~\cite{KeikhaLM20,loffler2010largest}, or the area of the largest or smallest triangle on a set of line segments~\cite{KeikhaLM21}.

  We note that other model formulations have also been proposed for dealing with inaccuracies in geometric problems. These are epsilon geometry~\cite{salesin1989epsilon}, probabilistic models~\cite{cormode2008approximation,suri2013most}, the aggregated uncertainty model~\cite{keikha2021clustering}, and the domain based models~\cite{edalat2001convex}.  

Regarding other color spanning objects 
in the precise setting, efficient algorithms are known for computing smallest color spanning squares~\cite{abellanas2001smallest}, strips and  rectangles~\cite{das2009smallest}, 2-intervals~\cite{jiang2014shortest}, equilateral triangles of fixed orientation~\cite{hasheminejad2015computing}, and axis-parallel squares~\cite{khanteimouri2013computing}. 
Acharyya et al.~\cite{acharyya2018minimum} propose efficient algorithms to compute the narrowest color spanning annulus for circles, axis-parallel squares, rectangles, and equilateral triangles of fixed orientation. The  minimum diameter color spanning set problem has also been studied~\cite{FleischerX11,JuFLZD13,ZhangCMTK09}. Its general version is known to be $\nph$ in the $L_p$ metric, for $1<p<\infty$, while in the $L_1$ and $L_\infty$ metrics the problem can be solved in polynomial time~\cite{FleischerX11}. 

Colored variations of other geometric problems have also been studied in the context of imprecise points~\cite{DaescuJL10,dror2008combinatorial,pop2019generalized}. Given a set of colored clusters, the problem of computing the minimum-weight color spanning tree ({\it generalized MST problem}) is $\apx$~\cite{dror2008combinatorial}. Even when each cluster contains exactly two points the problem remains $\nph$~\cite{fraser2013algorithms}. The problem admits a $2\delta$-approximation, where $\delta$ is the maximum size of the cluster for any imprecise vertex of the MST~\cite{pop2019generalized}. In the generalized TSP problem (GTSP), the imprecision is defined by neighborhoods (which are either continuous or discrete) and the goal is to find the shortest tour that visits all neighborhoods. It is known that GTSP with neighborhoods defined by subsets of cardinality two is inapproximable~\cite{dror2008combinatorial}. 


Finally, we would like to notice that the problem of computing the largest minimum color spanning circle (see the formal description below) is closely related to the dispersion problem in unit disks, where for a given set of $n$ unit disks the goal is to select $n$ points, one from each disk, such that the minimum pairwise distance among the selected points is maximized.
This problem was introduced by Fiala et al.~\cite{fiala2005systems}, and the authors  proved that it is $\nph$. It is also known that the problem is APX-hard~\cite{dumitrescu2012dispersion}. Constant factor approximation algorithms for this problem are given by Cabello~\cite{cabello2007approximation}, and by Dumitrescu and Jiang~\cite{dumitrescu2012dispersion}. 

\subsection{Problem definition and results}

  In this work, we are given a set ${\cal R}=\{R_1,R_2,\ldots,R_n\}$ of $n$ unit disks of {\em diameter~$1$} in the plane, where each disk is colored with one of $k$ possible colors. A colored point set $P$ is a {\em realization} of $\cal R$ if there exists a color-preserving bijection between $P$ and $\cal R$ such that each point in $P$ is contained in the corresponding disk in $\cal R$. Each realization of $\cal R$ gives a $\mcsc$ of certain radius. We are interested in finding realizations of $\cal R$ such that the corresponding $\mcsc$ has the smallest ($\smcsc$) and largest ($\lmcsc$) possible radius (see Fig.~\ref{fig:defn}b and c). 

We present the following results:
\begin{itemize}
    \item The $\smcsc$ problem can be solved in $O(nk\log n)$ time.
    \item The $\lmcsc$ problem is $\nph$.
    \item A $\frac{1}{3}$-factor approximation to the $\lmcsc$ problem can be computed in $O(nk\log n)$ time. When no two distinct color disks intersect, the approximation factor becomes $\frac{1}{2}$.
\end{itemize}

To the best of our knowledge there is no prior result on the minimum color spanning circle problem for imprecise point sets.

\section{The smallest $\mcsc$ ($\smcsc$) problem}\label{sec:smallest}

Given a set $\cal R$ of $n$ imprecise points modeled as unit disks, we present an algorithm that finds a $\smcsc$, denoted by $C_{opt}$, and the realization of $\cal R$ achieving it. Notice that the $\smcsc$ problem is equivalent  to  finding  a  smallest  circle  that  intersects  all  color  regions. 

Let ${\cal C}=\{c_1,\ldots,c_n\}$ be the set of center points of the disks in $\cal R$. Let $C_c$ be a $\mcsc$ of the colored set $\cal C$, and let $r_c$ be its radius. Finally, let $r_{opt}$ be the radius of~$C_{opt}$. The following relation holds:

\begin{lemma}\label{lem:equi_radii}
 If $r_c > \frac{1}{2}$, then $r_{opt} = r_c  - \frac{1}{2}$.
\end{lemma}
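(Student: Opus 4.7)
The plan is to establish the equality by proving the two inequalities $r_{opt} \leq r_c - \frac{1}{2}$ and $r_{opt} \geq r_c - \frac{1}{2}$ separately, each via a simple shrink/expand construction exploiting that every disk $R_i \in {\cal R}$ has radius exactly $\frac{1}{2}$.

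For the upper bound, I would start from a $\mcsc$ of the center set $\cal C$, call it $C_c$, and shrink it concentrically to obtain a disk $C'$ of radius $r_c - \frac{1}{2}$; this is well-defined and non-degenerate precisely because of the hypothesis $r_c > \frac{1}{2}$. For each color, $C_c$ contains at least one center $c_i \in {\cal C}$; since $c_i$ lies within distance $r_c$ from the common center of $C_c$ and $C'$, the disk $R_i$ (of radius $\frac{1}{2}$) must meet $C'$. Choosing a point $p_i \in R_i \cap C'$ for each such representative disk, and any point of $R_j$ for the remaining disks, yields a realization of $\cal R$ whose representatives of all $k$ colors sit inside $C'$. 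Hence $C'$ is a color spanning circle for this realization, and $r_{opt} \leq r_c - \frac{1}{2}$.

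For the lower bound, take any realization $P$ of $\cal R$ and let $C^*$ be its $\mcsc$, of center $o$ and radius $r$. Inside $C^*$ there lies one point of each color, and each such point $p_i$ belongs to the corresponding disk $R_i$, so its center $c_i$ satisfies $\|c_i - o\| \leq \|c_i - p_i\| + \|p_i - o\| \leq \frac{1}{2} + r$ by the triangle inequality. Therefore the disk centered at $o$ of radius $r + \frac{1}{2}$ contains a center of each color, i.e., it is a color spanning circle for $\cal C$, which forces $r + \frac{1}{2} \geq r_c$. Taking the infimum over all realizations gives $r_{opt} \geq r_c - \frac{1}{2}$.

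I do not foresee a serious obstacle: both directions reduce to the triangle inequality, and the hypothesis $r_c > \frac{1}{2}$ enters only to make the shrunk disk $C'$ non-degenerate in the upper-bound argument. The main points to be careful about are (i) distinguishing the center of a $\mcsc$ from the centers of the input disks in the distance estimates, and (ii) making sure the constructed realization in the upper bound assigns a representative point to \emph{every} disk in $\cal R$, not merely to the $k$ disks whose centers were chosen inside $C_c$.
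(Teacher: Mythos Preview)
Your proposal is correct and follows essentially the same approach as the paper: both directions are proved by the same concentric shrink/expand construction combined with the triangle inequality. The paper phrases the lower bound as a contradiction (expand $C_{opt}$ by $\frac{1}{2}$ and compare to $r_c$) and, for the upper bound, names the specific witness point on the segment from $c_i$ to the center of $C_c$, but these are cosmetic differences from your argument.
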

\begin{proof}

Consider a circle $C_c'$ concentric with $C_c$ with radius $r'_c = r_c - \frac{1}{2}$ (see Fig.~\ref{fig:lem1}). For every disk $R_i$ such that $c_i$ is contained in $C_c$, we have that $C'_c$ contains $c_i$ or the intersection between the boundary of $R_i$ and the segment connecting $c_i$ with the center of $C_c$. Thus, $C'_c$ contains at least one point of each color and $r_{opt}\leq r'_c$.

 \begin{figure}[t]
  \centering
  \includegraphics{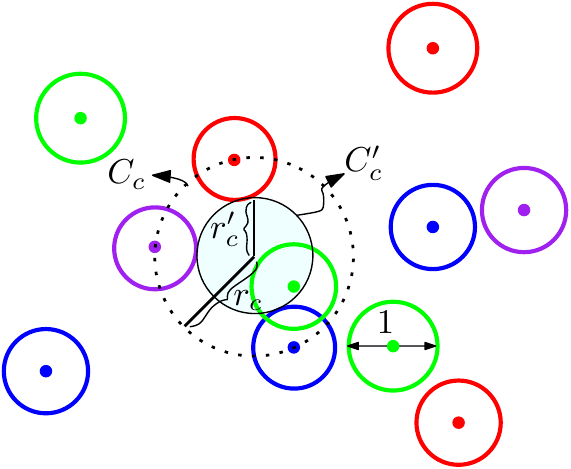}
  \caption{Illustration of Lemma~\ref{lem:equi_radii}. The dotted circle $C_c$ is the $\mcsc$ of the disk centers and the circle $C'_c$ is obtained by decreasing $C_c$'s radius by $\frac{1}{2}$.}\label{fig:lem1}
 \end{figure}

If $r_{opt} < r'_c$, we would get a feasible solution for the $\mcsc$ problem of $\cal C$ by increasing the radius of $C_{opt}$ by $\frac{1}{2}$. Since such a solution would have radius $r_{opt}+\frac{1}{2} < r'_c +\frac{1}{2}= r_c$, we would get a contradiction with the fact that $r_c$ is the radius of any $\mcsc$ of $\cal C$.  \hfill $\qed$
\end{proof}

Using Lemma~\ref{lem:equi_radii}, we compute $C_{opt}$ as described in Algorithm \ref{algo:smcsc}:\\

\begin{tcolorbox}[rightrule=2pt,toprule=0.5pt, bottomrule=2pt,leftrule=0.5pt,arc=8pt]
\captionof{algorithm}{Algorithm for the $\smcsc$ problem} \label{algo:smcsc}
\hrulefill\vskip -2mm
\begin{algorithmic}[1]
\Require {\it A set ${\cal R}$ of $n$ unit disks}
\Ensure {\it A $\smcsc$ of $\cal R$ with radius $r_{opt}$}
\State compute $C_c$;
\If {$r_c> 1/2$}
    \State $C_{opt}$ is a circle concentric with $C_c$, $r_{opt}\gets r_c-\frac{1}{2}$;
\Else
    \State $C_{opt}$ is a circle concentric with $C_c$, $r_{opt}\gets 0$; \Comment{$C_{opt}$ is a point}
\EndIf
\State \Return $C_{opt}$
\end{algorithmic}
\end{tcolorbox}


It only remains to prove that Algorithm \ref{algo:smcsc} is correct and efficient:

\begin{theorem}\label{thm:smcsc}
  A smallest minimum color spanning circle of $\cal R$ can be computed in $O(nk\log n)$ time.
\end{theorem}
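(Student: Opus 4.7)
The plan is to verify that Algorithm~\ref{algo:smcsc} returns a valid $\smcsc$ in the stated running time, splitting the correctness argument into the two cases handled by the pseudocode.

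For the case $r_c > \tfrac{1}{2}$, correctness is already delivered by Lemma~\ref{lem:equi_radii}: the concentric circle with radius $r_c-\tfrac12$ realizes the optimum. What remains for this case is merely to exhibit a concrete realization, namely, for each color $i$ we pick a disk $R_j$ of color $i$ whose center $c_j$ lies in $C_c$ and take as representative either $c_j$ itself (if it already lies in $C_c'$) or the intersection of $\partial R_j$ with the segment from $c_j$ to the center of $C_c$; this is exactly the argument in the proof of Lemma~\ref{lem:equi_radii} and produces a point of color $i$ inside $C_c'$.

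For the case $r_c \leq \tfrac{1}{2}$, I would argue that $r_{opt}=0$, i.e., there is a single point common to at least one disk of every color. Let $o$ be the center of $C_c$. Since $C_c$ is a minimum color spanning circle of the set of centers $\mathcal{C}$, it contains at least one center $c_j$ of each color, and each such center satisfies $\|o-c_j\|\leq r_c\leq \tfrac12$. Because the disks in $\mathcal R$ have diameter~$1$ (radius $\tfrac12$), this means $o\in R_j$ for at least one disk $R_j$ of every color, so choosing $o$ as the representative of each such disk (and any point of the corresponding disk for the rest) yields a realization whose $\mcsc$ degenerates to the point $o$. Hence $r_{opt}=0$, matching the algorithm's output.

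For the running time, the only nontrivial step is line~1. This is exactly an instance of the classical $\mcsc$ problem on the colored point set $\mathcal{C}$ of $n$ points with $k$ colors, which, as noted in the introduction via the result of Abellanas et al.~\cite{abellanas2001smallest} built on the upper envelope of Voronoi surfaces of Huttenlocher et al.~\cite{huttenlocher1993upper}, can be solved in $O(nk\log n)$ time. The remaining steps (comparing $r_c$ with $\tfrac12$ and outputting a concentric circle) take $O(1)$ time, so the total is $O(nk\log n)$. The main conceptual obstacle here is really just ensuring the small-radius boundary case is correctly interpreted as a degenerate zero-radius solution; everything else reduces cleanly to Lemma~\ref{lem:equi_radii} and to a black-box invocation of the precise $\mcsc$ algorithm.
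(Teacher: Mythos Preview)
Your proof is correct and follows essentially the same approach as the paper: both invoke Lemma~\ref{lem:equi_radii} for $r_c>\tfrac12$, argue that the center of $C_c$ lies in a disk of every color when $r_c\le\tfrac12$, and appeal to the $O(nk\log n)$ algorithm of~\cite{huttenlocher1993upper} for the running time. Your write-up is slightly more explicit about the realization, but the logic is identical.
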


\begin{proof}
In Algorithm \ref{algo:smcsc}, we first compute a $\mcsc$ $C_c$ of $\cal C$ in $O(nk\log n)$ time using the technique proposed by Huttenlocher et al.~\cite{huttenlocher1993upper}. If $r_c > \frac{1}{2}$, we shrink the radius of $C_c$ by $\frac{1}{2}$ and return this circle as a solution to the $\smcsc$ problem. The optimality of the solution follows from Lemma~\ref{lem:equi_radii}. If $r_c\leq \frac{1}{2}$, we have the following property: If $c_i$ is contained in $C_c$, then the center of $C_c$ in contained in~$R_i$. Hence, the center of $C_c$ lies in the intersection of $k$ distinct colored disks. In consequence, a circle of radius zero concentric with $C_c$ is a solution to the $\smcsc$ problem. \hfill $\qed$ 
\end{proof}

\section{The largest $\mcsc$ ($\lmcsc$) problem}


In this section, we consider the $\lmcsc$ problem, where for the given set $\cal R$ the goal is to find a realization such that any $\mcsc$ is as large as possible. We show that the problem is $\nph$, already for $k=2$, using a reduction from  planar $\threesat$ \cite{lichtenstein1982planar}.
Our reduction is inspired by those described by Fiala et al.~\cite{fiala2005systems}, and by Knauer et al.~\cite{KnauerLSW11}. 

Given a planar $\threesat$ instance, we construct a set of colored unit disks with the following property: There exists a realization such that any $\mcsc$ has diameter $\delta$ if and only if the $\threesat$ instance is satisfiable, where $\delta=\frac{9}{8}$.
We use disks of colors red and blue. Thus, a point set having any $\mcsc$ of diameter at least $\delta$ is equivalent to saying that there is no red-blue pair of points at distance less than $\delta$. We denote the family of realizations with this property by $\mathcal{P}^\delta$. With some abuse of notation, sometimes we also use the terminology $\mathcal{P}^\delta$ to characterize realizations of (only) a subset of the disks.

We next describe our construction.

\begin{figure}[t]
  \centerline{ \includegraphics{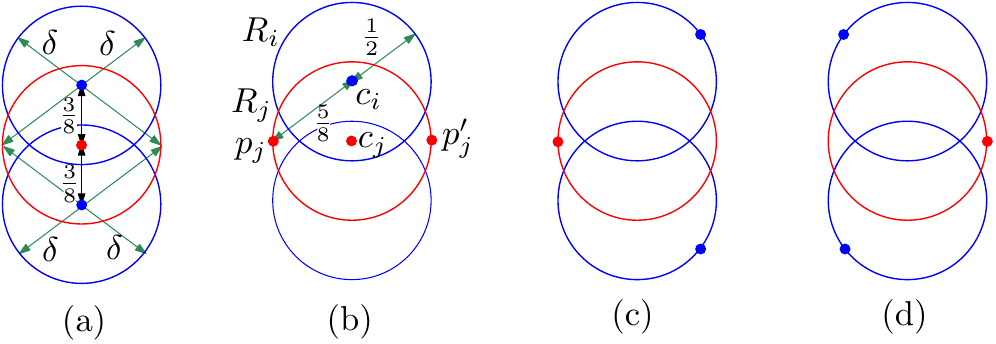}}
  \caption{(a) A stack of disks; (b) the distance from the left endpoint of the red disk to its farthest point in the top blue disk is $\frac 9 8$; (c,d) the two placements of points with red-blue distances equal to $\delta$.}\label{fig:stack}
\end{figure}

A \emph{stack of disks} is a set of three vertically aligned unit disks of alternating colors such that the center of the middle disk is at distance $\frac{3}{8}$ from the centers of the two other disks. See Fig.~\ref{fig:stack}a for an example of a blue-red-blue stack of disks.

\begin {lemma} \label{lem:stack}
  There exist two realizations in $\mathcal{P}^\delta$ of a stack of disks.
\end {lemma}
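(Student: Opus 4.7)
My plan is to place coordinates so that the middle (say red) unit disk is centered at the origin and the two outer blue unit disks are centered at $(0, 3/8)$ and $(0, -3/8)$. The key preliminary calculation is that the distance from the leftmost point of the red disk, $p_r=(-1/2,0)$, to the center of each blue disk equals $\sqrt{(1/2)^2+(3/8)^2}=\sqrt{25/64}=5/8$; therefore the farthest point of each blue disk from $p_r$ lies at distance exactly $5/8 + 1/2 = 9/8 = \delta$. This is precisely the distance displayed in Fig.~\ref{fig:stack}b, and I would reproduce it as the first step of the proof.

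I would then exhibit the first realization: put the red point at $p_r=(-1/2,0)$, the top blue point at the (unique) farthest point of the top blue disk from $p_r$, and the bottom blue point defined analogously. Both red-blue distances equal $\delta$ by construction, and the distance between the two blue points is irrelevant since they share a color; hence every red-blue pair is at distance at least $\delta$, so the realization lies in $\mathcal{P}^\delta$. The second realization is obtained by reflecting the first across the vertical line through the centers of the disks: the red point moves to $(1/2,0)$ and the two blue points move to the farthest points of their disks from this new red location. By the symmetry of the stack about that vertical axis, the same distance computation applies, so this realization is also in $\mathcal{P}^\delta$, and it is clearly distinct from the first (cf.\ Fig.~\ref{fig:stack}c,d).

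The main obstacle is identifying the right position for the red point. If one places it along the vertical line through the centers (for instance at $(0, 1/2)$), it becomes too close to one of the two blue disks and no choice of blue points can achieve the required separation simultaneously. The critical observation is that the binding constraint is the distance from the red point to the farther blue disk center; the horizontally extreme placements $(-1/2,0)$ and $(1/2,0)$ are precisely the positions that balance the distances to the two blue disks, and the arithmetic then produces exactly $\delta = 9/8$. Once this observation is in place, the two realizations and the verification of the red-blue distance bound are short and symmetric, completing the proof.
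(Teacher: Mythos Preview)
Your construction and verification are correct for the literal statement: you exhibit two distinct realizations and check that every red--blue pair is at distance exactly $\delta$. That part is fine and essentially matches the two configurations the paper depicts.

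However, the paper's proof establishes something stronger than you do, and that stronger fact is what the reduction actually needs. The paper shows that for \emph{any} points $q_j$ in the middle disk and $q_i$ in an outer disk one has
\[
d(q_j,q_i) \;\le\; d(q_j,c_i)+d(c_i,q_i) \;\le\; d(p_j,c_i)+\tfrac{1}{2} \;=\; \tfrac{5}{8}+\tfrac{1}{2} \;=\; \tfrac{9}{8},
\]
with equality only when $q_j\in\{p_j,p_j'\}$ and $q_j,c_i,q_i$ are collinear. This yields that there are \emph{exactly} two realizations in $\mathcal{P}^\delta$, not merely at least two. That uniqueness is precisely what is invoked afterwards (``By Lemma~\ref{lem:stack}, the stack containing $R_k$ is constrained to choose either $p_k$ or $p'_k$'') to force propagation through the variable gadget.

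Your third paragraph gestures at uniqueness---ruling out $(0,1/2)$ and appealing to ``balancing''---but this is not a proof: you have not excluded, say, red points slightly off the horizontal diameter, nor argued that the blue points are then forced. If you want your write-up to support the downstream use of the lemma, you should replace the heuristic discussion with the triangle-inequality bound above (or an equivalent argument) to pin down that $(\pm 1/2,0)$ are the \emph{only} admissible red positions.
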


\begin{proof}
Let us denote by $R_i$ and $R_j$ the upper and middle disk of the stack, and let $c_i$ and $c_j$ denote their centers. Additionally, let the leftmost and rightmost points of $R_j$ be denoted by $p_j$ and $p'_j$ (see Fig.~\ref{fig:stack}b). Notice that, among all points which are on the upper half of $R_j$ or on its horizontal diameter, $p_j$ and $p'_j$ are the furthest to $c_i$.

Let $q_i,q_j$ be two points such that $q_i$ lies in $R_i$ and $q_j$ lies in $R_j$. Without loss of generality, we assume that $q_j$ lies on the upper half of $R_j$ or on its horizontal diameter (otherwise, we repeat the same arguments taking the lower disk of the stack instead of the upper one). Then we have


\begin{align*}
    d(q_j,q_i)&\leq d(q_j,c_i)+d(c_i,q_i) \leq d(p_j,c_i)+\frac{1}{2} &=\sqrt{(d(p_j,c_j))^2+(d(c_i,c_j))^2}+\frac{1}{2} \\ & =\frac{5}{8}+\frac{1}{2} =\frac{9}{8}.
\end{align*}

Notice that equality is only attained if $q_j,c_i$ and $q_i$ are aligned, and $q_j$ is equal to $p_j$ or $p'_j$. Thus, there exist only two realizations in $\mathcal{P}^\delta$, shown in Fig.~\ref{fig:stack}c and d. \hfill $\qed$
\end{proof}


\subsection{Variable Gadget} 

\begin{figure}[t]
  \centerline{ \includegraphics{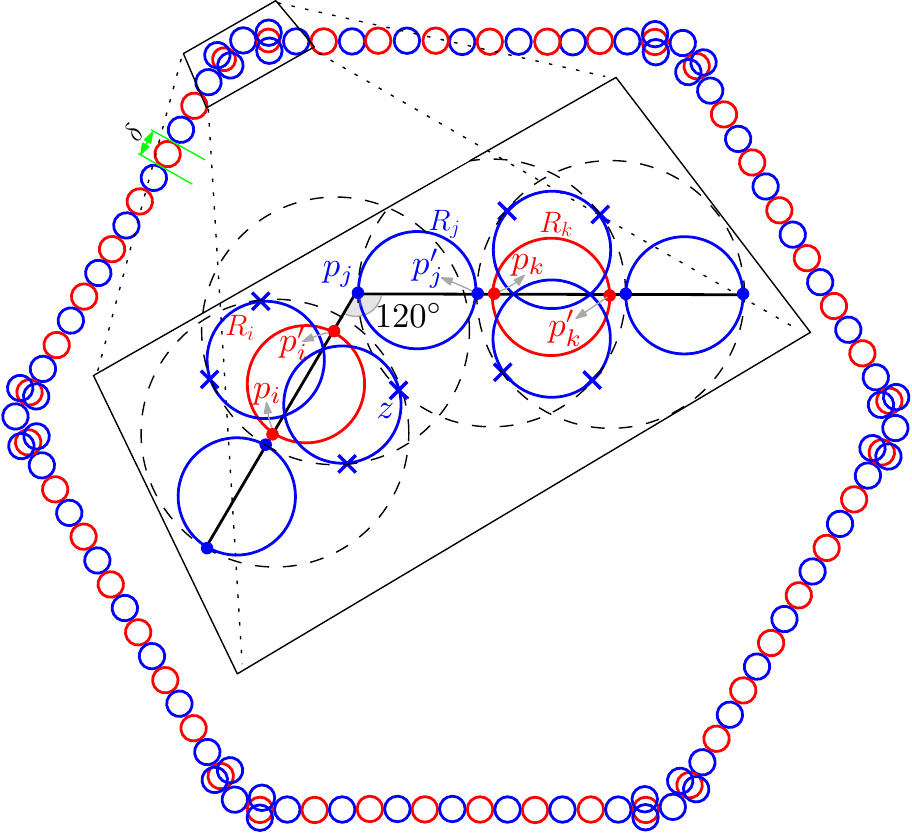}}
  \caption{A variable gadget with zoomed in view for the top-left corner. The dashed circles are centered at $p_k',p_k,p_i',p_i$ and have radius $\delta$.}\label{fig:var_gad}
\end{figure}

Our variable gadget (see Fig.~\ref{fig:var_gad}) is an alternating chain of red and blue disks, whose centers lie on a hexagon, together with some stacks of disks. The distance between the centers of two consecutive red and blue disks along the same edge of the hexagon is $\delta$. Each edge of the hexagon contains two stacks of disks placed near the endpoints, and every pair of consecutive edges is joined by a blue disk. In the following description, we say that $p_i$ and $p'_i$ are the \emph{leftmost} and \emph{rightmost} points of disk $R_i$ if they are its leftmost and rightmost points after the hexagon has rotated so that the edge containing the center of $R_i$ is horizontal and the center of the hexagon is below the edge. 

At the top-left corner of the variable gadgets, the disks are placed as follows (the other corners are constructed similarly). Let $R_i$ be the last disk in clockwise order along the top-left edge of the hexagon, and let $R_j$ and $R_k$ be the first and second disks along the top edge (see Fig.~\ref{fig:var_gad}).
The point $p_j$ lies at the top left corner of the hexagon, and the centers of $R_j$ and $R_k$ are at distance $\delta$. Regarding $R_i$, it is placed in such a way that the lower blue disk of its stack contains a point $z$ which is at distance $\delta$ from both $p_k$ and $p_i$ (see Fig.~\ref{fig:var_gad}). Notice that, if a realization in $\mathcal{P}^\delta$ chooses $p'_i$, the choice for $R_j$ is not unique; however, none of the points in $R_j$ at distance at least $\delta$ from $p'_i$ is at distance at least $\delta$ from $p_k$. Therefore, the choice of $p'_i$ forces the choice of $p'_k$, and clearly the choice of $p_k$ forces the choice of $p_i$. 


For a realization in $\mathcal{P}^\delta$ of a variable gadget, the following holds: By Lemma~\ref{lem:stack}, the stack containing $R_k$ is constrained to choose either $p_k$ or $p'_k$. Let us assume that it chooses $p'_k$. This choice propagates to the right through the chain of disks in the top edge. The red disk of the stack on the right of the edge also chooses its rightmost point, and this forces the red disk of the first stack of the top right edge to choose its rightmost point too. Therefore, the choice of $p'_k$ propagates through the whole hexagon. If $R_k$ chooses $p_k$, the same phenomenon occurs. We conclude:

\begin{lemma}\label{lem:var_two_states}
For any realization in $\mathcal{P}^\delta$ of a variable gadget, either all the unit disks centered at the edges of the hexagon, except for the ones intersecting corners of the hexagon, choose their rightmost point, or they all choose their leftmost point.
\end{lemma}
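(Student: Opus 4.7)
The plan is to combine two propagation properties — one along edges, one across corners — with Lemma~\ref{lem:stack} so that a choice of extreme point at any single stack middle disk is forced to propagate consistently around the entire hexagon.

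First I would establish edge propagation: for two consecutive edge disks $R_\alpha$ and $R_\beta$ on the same edge, with centers at distance exactly $\delta = 9/8$ and with $R_\beta$ clockwise from $R_\alpha$, the constraint $d(q_\alpha, q_\beta) \geq \delta$ implies that if $q_\alpha$ is the rightmost point of $R_\alpha$ then $q_\beta$ is forced to be the rightmost point of $R_\beta$, and symmetrically that $q_\beta$ leftmost forces $q_\alpha$ leftmost. This is a short calculation: placing $c_\alpha = (0,0)$ and $c_\beta = (9/8, 0)$, the function $d((1/2, 0), q_\beta)^2$ restricted to $R_\beta$ attains its maximum value $\delta^2 = 81/64$ uniquely at the rightmost point $(13/8, 0)$. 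Iterating this implication carries the extreme choice of a stack middle disk through all the disks of its edge up to and including the stack at the opposite end.

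Next I would use the corner argument already laid out in the paragraph preceding the lemma to establish corner propagation: the infeasibility of the pair $(q_i, q_k) = (p'_i, p_k)$ for the corner blue disk $R_j$ forces $q_i = p'_i \Leftrightarrow q_k = p'_k$ across every corner. Combining the two, fix any stack middle disk and assume by Lemma~\ref{lem:stack} that it is at its rightmost point; walking clockwise, edge propagation carries ``rightmost'' forward along each edge and corner propagation carries it across each corner, eventually completing a cycle around the hexagon and forcing every non-corner edge disk to lie at its rightmost point. The symmetric argument starting from a ``leftmost'' choice completes the case analysis.

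The main obstacle I expect is ruling out the spurious locally-feasible configuration in which the stack at one end of an edge is placed at its leftmost while the stack at the other end is placed at its rightmost; neither direction of edge propagation applies directly, so this must be contradicted by chasing the implied orientations through corner propagation across all six corners and observing that the cycle fails to close consistently.
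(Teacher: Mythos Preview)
Your approach is essentially the paper's: anchor at a stack disk (which by Lemma~\ref{lem:stack} sits at one of its two extreme points), then propagate that choice around the hexagon via the edge and corner arguments. Two small corrections are worth making.

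First, the corner claim is overstated. Ruling out the single combination $(q_i,q_k)=(p'_i,p_k)$ does \emph{not} yield the biconditional $q_i=p'_i \Leftrightarrow q_k=p'_k$; it yields only $p'_i \Rightarrow p'_k$ (equivalently, by contrapositive and Lemma~\ref{lem:stack}, $p_k \Rightarrow p_i$). The paper establishes exactly these one-directional implications and nothing more. Fortunately that is all you use: ``rightmost'' propagates clockwise across a corner, and ``leftmost'' propagates counterclockwise. So the slip is harmless, but the biconditional as written is unjustified (and likely false, since the pair $(p_i,p'_k)$ is not excluded by the corner geometry).

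Second, the ``obstacle'' you flag is not a separate case. Your own case split already covers it: you fix one particular stack disk and argue that if it is rightmost then a full clockwise circuit forces every non-corner disk rightmost, and if it is leftmost then a full counterclockwise circuit forces every non-corner disk leftmost. In either case the circuit passes through every edge and hence determines both stacks on every edge, so the configuration ``left stack leftmost, right stack rightmost on the same edge'' is already excluded without an additional parity-chasing argument. You can simply drop that paragraph.
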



\subsection{Clause gadget} 

\begin{figure}[t]
\centering
  \includegraphics{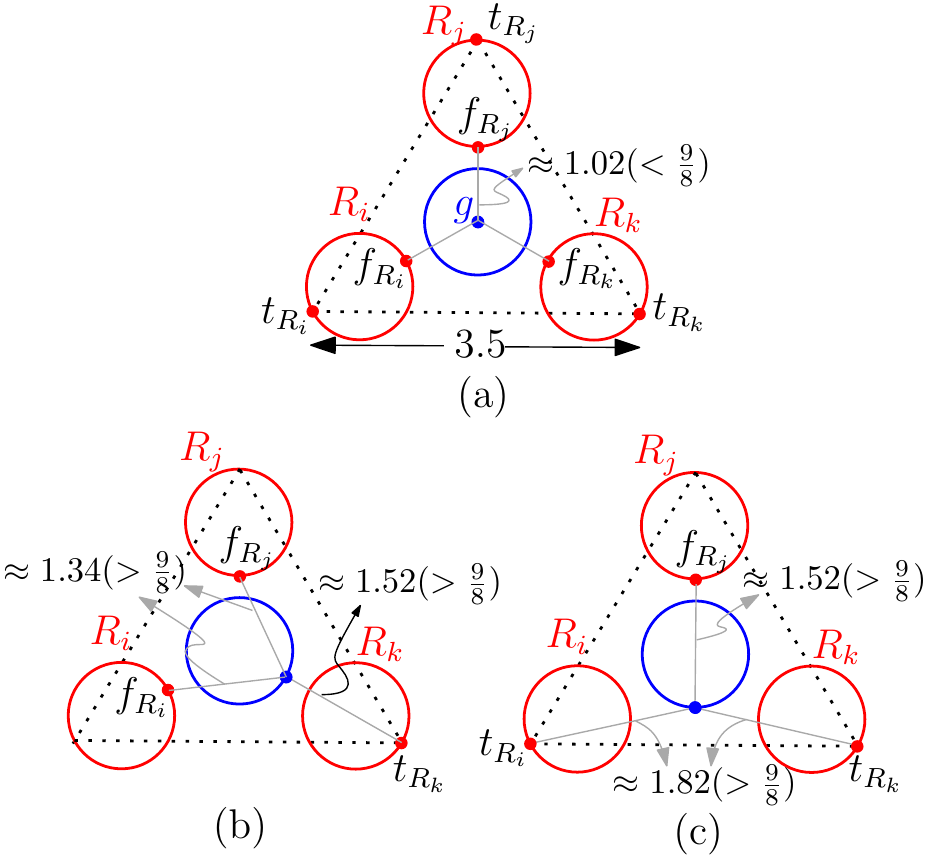}
   \caption{(a-c) A clause gadget with distinct truth assignments. In the placement in (a), the red-blue pairs are at distance smaller than $\delta$, while in (b) and (c) they are at distance greater than $\delta$.} \label{fig:clause-gadget}
\end{figure}

Clause gadgets are illustrated in Fig.~\ref{fig:clause-gadget}a-c. We consider an equilateral triangle of side length $\frac{7}{2}$ and center $g$, and we place one red disk at every corner of the triangle in such a way that the center of the disk is aligned with $g$ and its nearest corner of the triangle. Then we place a blue disk centered at $g$. Each red disk of a clause gadget is associated to one of the literals occurring in the clause, and is connected to the corresponding variable gadget via a connection gadget. Intuitively, to decide if there exists any realization in $\mathcal{P}^\delta$, each red disk $R_\tau$  of the clause gadget has essentially two relevant placements, called $t_{R_\tau}$ and $f_{R_\tau}$ (see Fig.~\ref{fig:clause-gadget}a). As we will see, when the associated literal is set to \emph{true}, we can choose the placement $t_{R_\tau}$, and when it is set to \emph{false}, we are forced to choose $f_{R_\tau}$. It is easy to see that any realization in $\mathcal{P}^\delta$ of the clause gadget does not choose $f_{R_\tau}$ for at least one of the disks $R_\tau$ (see Fig.~\ref{fig:clause-gadget}b and c).


\subsection{Connection gadget} 

\begin{figure}[t]
\centering
  \includegraphics{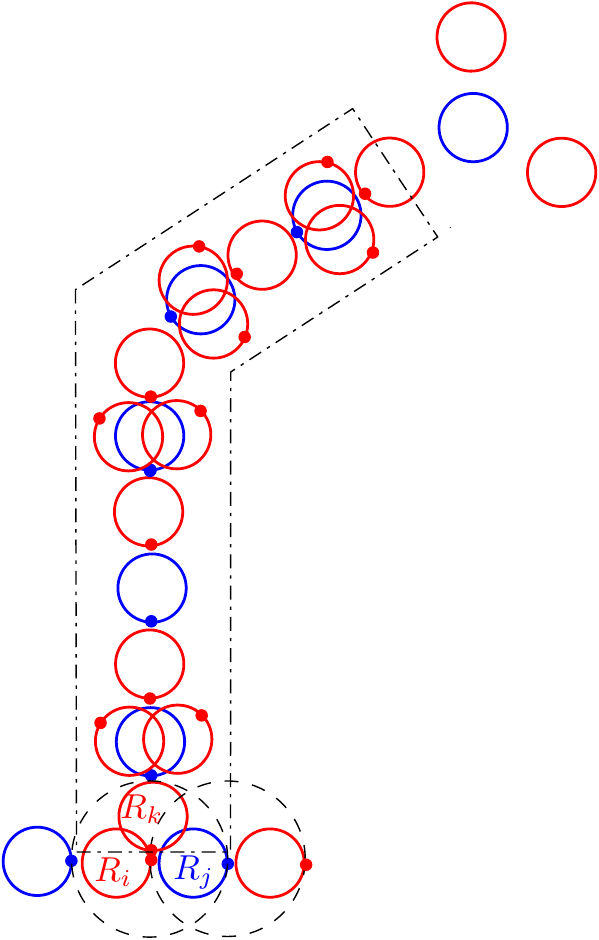}
   \caption{Connection gadget for a positive variable in a clause; the variable has truth value $T$.} \label{fig:connection-gadget}
\end{figure}

A variable gadget is connected to each of its corresponding clause gadgets with the help of a \emph{connection} gadget. A connection gadget consists of an alternating chain of red and blue disks together with some stacks of disks (see Fig.~\ref{fig:connection-gadget}). 

The chain in a connection gadget is formed by straight line portions of groups of disks equal to those along the edges of hexagons associated to  variable gadgets. These straight line portions are connected at bends of $120^\circ$. The placement of disks at each bend is similar to the placements at the corners of a variable gadget; in particular, stacks of disks are used around the bends. 

\begin{figure}[t]
  \centering
  \includegraphics{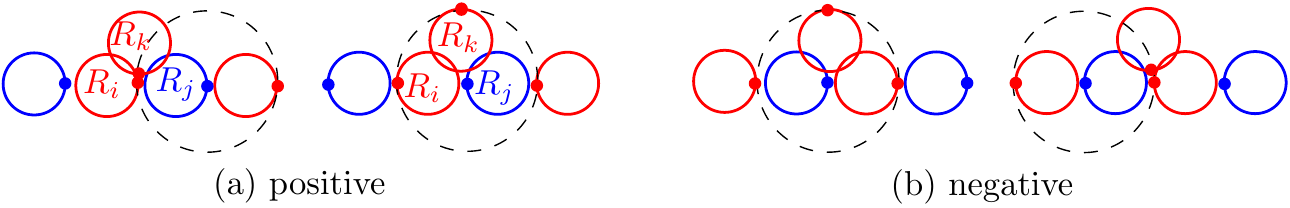}
  \caption{Point placements corresponding to (a) positive and (b) negative variable in a clause, at the intersection of a connection gadget and a variable gadget. 
     {\color{black} In both figures, the left subfigures correspond to the truth value $T$ for the variable, and the right subfigures to the truth value $F$.}
  }
 \label{fig:connecting}
 \end{figure}

The precise connection of the variable gadget to the clause gadget by a connection gadget depends on whether the variable in the clause is positive or negative. For a positive variable, the connection is established through a pair of a red disk $R_i$ and a blue disk $R_j$ which appear consecutively along an edge of the hexagon, and such that none of them intersects a corner of the hexagon, and $R_i$ comes before $R_j$ in clockwise order (see Fig.~\ref{fig:connecting}a). Let $R_k$ be the first red disk of the connection gadget. The top-most point of $R_k$ is at distance $\delta$ from $p_j$, and its bottom-most point is at distance $\delta$ from $p'_j$. For a negative variable, the connection is established through a pair of blue-red disks in the variable gadget. The placement of the first red disk of the connection gadget is analogous to the one in the red-blue configuration (see Fig.~\ref{fig:connecting}b). 

To ensure the desired propagation of the placement of points in the disks, stacks of disks are used next to the first disk $R_k$, and next to the red disk of the clause gadget associated to the literal (see Fig.~\ref{fig:connection-gadget}).

The truth value $T$ of a variable is associated with the choice of the rightmost points of the disks in the variable gadget. If the variable appears positive at a clause, this allows (for a realization in $\mathcal{P}^\delta$) the choice of the bottom-most point of $R_k$, and this propagates through the connection gadget and eventually allows the choice of the associated point $t_{R_\tau}$ in the clause gadget (see Fig.~\ref{fig:connection-gadget}). If the truth value is $F$, $p_j$ is selected, which (for a realization in $\mathcal{P}^\delta$) forces the choice of a point in a close vicinity of the top-most point of $R_k$. Since there is a stack next to $R_k$, the blue disk in the stack chooses the top-most point, and this eventually forces the choice of $f_{R_\tau}$. The analysis of the other cases are similar.


\subsection{Final construction}

In the following lemma, we put all pieces of the construction together, and we prove that it has polynomial size.


\begin{lemma} \label{lem:fin-cons}
Using the gadgets described above, we can construct in polynomial time a polynomial-size instance of the $\lmcsc$ problem associated to the given planar $\threesat$ formula.
\end{lemma}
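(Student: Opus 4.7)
The plan is to combine the gadgets along a standard drawing of the planar $\threesat$ formula. I would begin with a rectilinear planar embedding of the formula (e.g.\ the Knuth--Raghunathan style layout, or any other embedding of polynomial size in the number of variables $n$ and clauses $m$), in which variables appear as horizontal segments, clauses appear as small regions above or below, and each clause--variable incidence corresponds to a simple (possibly bent) orthogonal path in the plane that does not cross other paths or variable segments.

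Next I would scale this embedding by a sufficiently large constant so that each variable segment can accommodate a hexagonal variable gadget whose size is determined by the number of clauses in which the variable occurs (we need enough consecutive pairs of disks along the edges of the hexagon to host one connection departure per occurrence, which is at most $3m$ per variable). Inside the scaled region allocated to each variable I place a hexagonal variable gadget; at each clause location I place a clause gadget; and each clause--variable incidence is realized by a connection gadget routed along the orthogonal path of the embedding, with the orthogonal turns replaced by sequences of $120^\circ$ bends of the type described in the connection gadget. Since alternating chains of unit disks can be made arbitrarily long by inserting more disk pairs (and stacks wherever a bend is needed), the routing of each connection is possible as long as the allotted corridor around its path is wide enough, which we guarantee by the initial scaling.

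The correctness of the construction follows from the gadget lemmas already established. By Lemma~\ref{lem:var_two_states}, every variable gadget has exactly two possible ``states'' in $\mathcal{P}^\delta$, which we identify with truth values $T$ and $F$. The connection gadget analysis in Section~3.3 shows that the state of the variable gadget is propagated along the chain and forces the placement of the red disk $R_\tau$ of the clause gadget to $t_{R_\tau}$ or $f_{R_\tau}$ according to whether the literal is satisfied or not. Finally, by the argument surrounding Fig.~\ref{fig:clause-gadget}, the clause gadget admits a placement in $\mathcal{P}^\delta$ if and only if at least one of its three red disks is in state $t_{R_\tau}$, i.e.\ the clause is satisfied.

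The main (and essentially only) obstacle in carrying this out rigorously is the geometric bookkeeping of the routing: verifying that after scaling, the connection gadgets associated to distinct incidences can be made vertex-disjoint, do not come within distance $\delta$ of unrelated disks of the opposite color, and reach the prescribed attachment pairs on the hexagon with the correct parity of red/blue along the chain (so that the last disk of the connection aligns correctly with the corresponding red disk of the clause). This is handled by choosing the scaling factor large enough compared to the constant width of a connection chain, and by inserting, if necessary, one or two additional $120^\circ$ bends at the ends of each connection to adjust the parity; since each connection uses $O(L)$ disks where $L$ is the length of its routed path in the embedding, and the embedding has polynomial total edge length, the whole instance has $O(\mathrm{poly}(n+m))$ disks and is computed in polynomial time. \qed
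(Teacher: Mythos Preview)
Your high-level plan coincides with the paper's: start from a Knuth--Raghunathan rectilinear layout, blow it up, drop in hexagonal variable gadgets, clause gadgets, and connection chains with $120^\circ$ bends along the routed edges. The paragraph on correctness is fine but belongs to Lemma~\ref{lem:nph}, not here; Lemma~\ref{lem:fin-cons} is purely about buildability and size.

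There is, however, one genuine gap. You observe the red/blue \emph{parity} issue and propose to fix it with extra $120^\circ$ bends, but you do not address the harder \emph{length-quantization} problem. In this construction the disks have fixed diameter~$1$ and consecutive disks along a straight portion are at the fixed inter-center distance $\delta=\tfrac{9}{8}$, so a straight portion can only realize lengths of the form $\tfrac{9k}{8}+1$ for integer $k$. The two endpoints of a connection gadget are \emph{not} free: one is pinned at a specific red disk of the clause gadget, the other at a specific red/blue pair on the hexagon, and the bend angles are fixed at $120^\circ$. Scaling the embedding does not help, because the disk size does not scale with it; and ``inserting one or two extra bends'' does not obviously yield a continuous degree of freedom, since each straight segment between bends is again quantized. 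Without a mechanism for continuous length adjustment, you cannot in general make the chain land exactly where it must.

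The paper closes this gap with a concrete local fix that you are missing: in the middle of a chain, pick two to four consecutive disks, increase the gap between them from $\tfrac{1}{8}$ up to $\tfrac{7}{8}$, and replace those disks by stacks so that the two-state propagation still goes through despite the enlarged gaps. Three such adjustable gaps give up to $\tfrac{9}{4}$ of continuous slack, which suffices to absorb both the length discrepancy and the parity issue simultaneously. If you add this ingredient (or an equivalent one) your argument goes through; as written, the routing step is not yet justified.
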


\begin{proof}
The dependency graph of the $\threesat$ formula can be embedded so that all variables lie on a horizontal line, all clauses are on either side of them, and each edge connecting a clause to a variable is an orthogonal edge with at most one bend (see Fig.~\ref{fig:pla_emb} for an example)~\cite{KnuthR92}. We take an embedding with these properties which, additionally, is aligned with a grid of resolution 20 (i.e., the distance between any pair of consecutive horizontal or vertical segments is at least 20). In the following lines, we explain how to modify this embedding to produce an instance of the $\lmcsc$ problem. Our construction is illustrated in Fig.~\ref{fig:pla_pic}. 

\begin{figure}[t]
  \centering
  \includegraphics{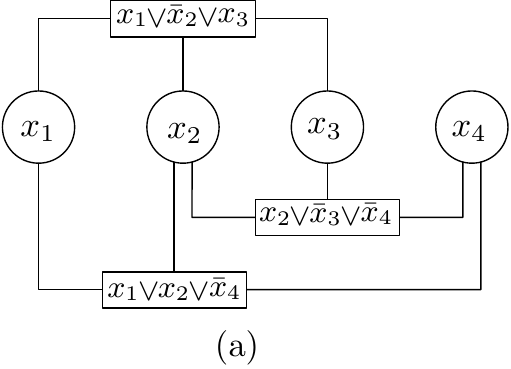}\quad \quad \quad \quad \quad \quad \quad \includegraphics{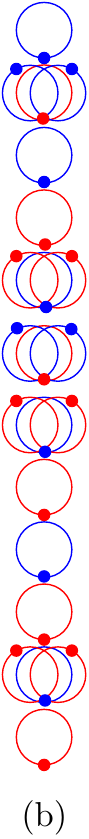}
  \caption{(a) Planar embedding of ${\cal F}=(x_1 \lor x_2 \lor \bar{x}_4)\land (x_1 \lor \bar{x}_2 \lor x_3) \land (x_2 \lor \bar{x}_3 \lor \bar{x}_4)$. (b) A vertical portion of a connection gadget with increased gaps among three central disks, which have been replaced by stacks.}\label{fig:pla_emb}
 \end{figure}
 
First, we replace the variables of the embedding by  gadgets as described above. The size of a variable gadget is chosen so that the upper and lower side of the hexagon are long enough to fit all edges leading to clauses. Since any two such edges are at distance at least 20 and we place them at distance at least 20 from the extremes of the edges of the hexagon, the side of the hexagon is no longer than $20(m+1)$, where $m$ is the number of clauses in the formula. During this step, we might again need to horizontally ``stretch" the initial embedding so that variable gadgets end up at distance at least 20 from each other. 

Next, we replace the clauses of the embedding by clause gadgets. We explain in detail the case where the clause is above the variables; the other case is solved symmetrically (in particular, clause gadgets lying below the variables will be rotated by $180^\circ$). Initially, the clause is incident to three edges, which we call the ``left", ``middle" and ``right" edges. Each of them is assigned to one of the red disks of the clause gadgets, i.e., the connection between the variable gadget and the connection gadget will be done through that disk. In particular, the left edge is assigned to the left-most red disk of the gadget, the middle edge is assigned to the right-most red disk of the gadget, and the right edge is assigned to the top red disk of the gadget (see Fig.~\ref{fig:pla_pic}).

Finally, we replace every edge connecting a clause to a variable gadget by a connection gadget in the following way: The position of the first disk of the connection gadget (that is, the nearest to the clause gadget) is given by the previous assignment of edges to red disks of the clause. The position of the last disk (i.e., the disk of the connection gadget which intersects the variable gadget) is as near as possible to the original location of the edge, while respecting the requirements for each type of connection (which depend on whether the variable appears positive or negative in the clause; refer to Fig.~\ref{fig:connecting}). Since the left and middle edges become connection gadgets with one bend of $120^\circ$ (see Fig.~\ref{fig:pla_pic}), the precise shape of these gadgets (including the position of the bend) is determined by their first and last disks. Notice also that, since the middle edge (which is vertical) is replaced by a chain of shape ``\,\includegraphics[scale=.4]{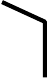}\,'', it might be necessary to move the clause gadget slightly towards the left. Regarding the right edge, it is replaced by a chain of shape ``\,\includegraphics[scale=.4]{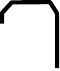}\,'' starting from the top red disk of the clause variable. In this case, there is some freedom to choose the positions of the bends, and we place them so that the final connection gadget is close enough to the original replaced edge. 

After replacing orthogonal edges by connection gadgets with bends of $120^\circ$, some pairs of connection gadgets might have ended up too near (or even cross) and might interfere with each other. This can again be solved by stretching the embedding to separate them.

We finally point out that the straight portions of our connection gadgets are made of chains of unit disks at a certain fixed distance, and thus they can only have certain prescribed lengths of $\frac{9k}{8}+1$ for some integer $k$, which we call \emph{integer} lengths (e.g., for vertical portions of a connection gadget, the distance between the bottom-most and top-most points of the chain can only take values of the form $\frac{9k}{8}+1$, for some integer $k$). Therefore, it might happen that the positions of a clause and variable gadget force the placement of a chain which does not have integer length. In this case, we proceed as follows: Suppose that we need a vertical chain such that the positions and colors of the first and last disk are prescribed, and either the distance $l$ between the bottom-most and top-most points of the chain does not correspond to an integer length, or it does but it has the wrong parity (e.g., when we try to build the chain, the bottom-most disk turns up blue, but we would need it to be red). We take the longest chain of integer length having length smaller than $l$ and having the desired combination of colors of the first and last disk. If this chain has length $\frac{9k}{8}+1$, for some integer $k$, then we have that $l<\frac{9(k+2)}{8}+1$. Thus, the difference in lengths is at most $\frac{9}{4}$. To solve this discrepancy, we select two to four consecutive disks from the middle of the chain and increase the gap between them from $\frac{1}{8}$ to some appropriate value smaller than or equal to $\frac{7}{8}$ (the remaining gaps between consecutive disks of the chain remain of $\frac{1}{8}$). 
By replacing these two to four disks by stacks, we ensure the required propagation among them despite the increased gaps. See Fig.~\ref{fig:pla_emb}b for an example. Since we might increase up to three gaps by up to $\frac{3}{4}$, we can achieve the difference in lengths of at most $\frac{9}{4}$.

Since the final construction has a polynomial number of disks and all the steps in the construction can be performed in polynomial time, the lemma follows.\hfill $\qed$ \end{proof}

\begin{figure}
  \centering
  \includegraphics[angle =90]{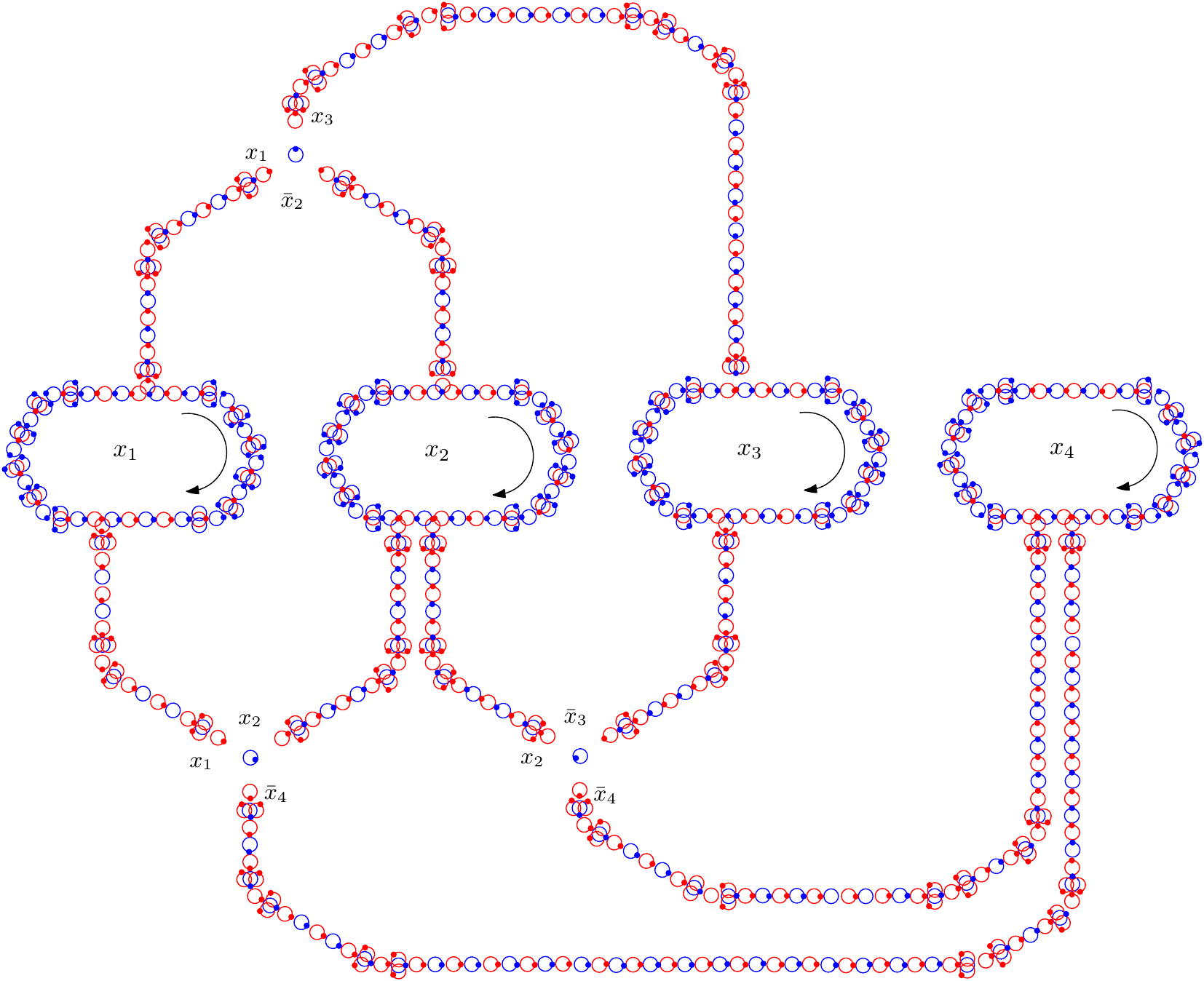}
  \caption{Instance of the $\lmcsc$ problem associated to the formula $\cal F$ in Fig.~\ref{fig:pla_emb}a. For the sake of readability, the figure has been rotated, the proportions among the sizes of the different objects have been altered a little, and the hexagons associated to the variables are not drawn regular. The placement of points  corresponds to the satisfying assignment $x_1=F$, $x_2=T$, $x_3=T$, and $x_4=F$.}\label{fig:pla_pic}
 \end{figure}
 

Given a planar $\threesat$ formula $\cal F$, let us denote by ${\cal R}^{\cal F}$ an associated instance of the $\lmcsc$ problem constructed following the steps in the proof of Lemma~\ref{lem:fin-cons}. We prove that ${\cal R}^{\cal F}$  has the desired property:






 \begin{lemma}\label{lem:nph}
 The planar $\threesat$ formula $\cal F$ has a satisfying assignment if and only if there exists a realization of ${\cal R}^{\cal F}$ in $\mathcal{P}^\delta$.
 \end{lemma}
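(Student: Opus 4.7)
The plan is to establish the two implications separately. For the forward direction, given a satisfying truth assignment of $\cal F$, I will construct an explicit realization of ${\cal R}^{\cal F}$ and verify that every red-blue pair lies at distance at least $\delta$. For the backward direction, given a realization in $\mathcal{P}^\delta$, I will read off a truth assignment from the state of each variable gadget (available by Lemma~\ref{lem:var_two_states}) and argue that every clause is satisfied.

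For the forward direction ($\Rightarrow$), I would process each variable gadget first. If $x_i = T$, assign every disk centered on an edge of the hexagon (excluding corner disks) its rightmost point in the sense of Section~\ref{sec:smallest}'s local orientation; if $x_i = F$, assign the leftmost point. The corner blue disks and the stacks placed near corners are then forced (by the propagation argument already worked out before Lemma~\ref{lem:var_two_states} and by Lemma~\ref{lem:stack}) into the unique positions compatible with these choices. Next I would propagate these choices along each connection gadget: by construction (Fig.~\ref{fig:connecting}), a positive occurrence transmits the ``$T$'' state into the bottom-most point of the first chain disk $R_k$ (and ``$F$'' into the top-most point), while a negative occurrence flips the correspondence; the intermediate stacks at bends and the straight portions then propagate consistently by iterating Lemma~\ref{lem:stack}. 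Finally, at each clause gadget, since the assignment satisfies the clause, at least one red disk $R_\tau$ may be placed at $t_{R_\tau}$; the remaining red disks take $t_{R_\tau}$ or $f_{R_\tau}$ according to what their connection gadgets have propagated, and I choose the central blue disk to be the unique point of its disk at distance $\geq \delta$ from all three red choices, which exists because not all three are $f_{R_\tau}$ (Fig.~\ref{fig:clause-gadget}b, c). A finite case check, carried out once per gadget type and once per bend type, shows that every red-blue pair is then at distance at least $\delta$.

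For the backward direction ($\Leftarrow$), suppose a realization in $\mathcal{P}^\delta$ is given. Lemma~\ref{lem:var_two_states} associates to each variable gadget a binary state; define $x_i = T$ if the gadget of $x_i$ is in the ``rightmost'' state and $x_i = F$ otherwise. Fix any clause $C$ and suppose, for contradiction, that none of its literals is satisfied. Walking back along each of $C$'s three connection gadgets from the variable side, Lemmas~\ref{lem:stack} and the local argument used for the corners of the variable gadget force the first chain disk $R_k$ of each connection into the extremal point opposite to the one that a satisfied literal would produce; iterating the same propagation along the straight portions and the $120^\circ$ bends (which are governed by stacks in the same way as the corners of the hexagon) forces the corresponding red disk of the clause gadget to sit at $f_{R_\tau}$. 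Thus, all three red disks of $C$ are at $f_{R_\tau}$, and by Fig.~\ref{fig:clause-gadget}a no placement of the central blue disk keeps all three distances $\geq \delta$, contradicting that the realization is in $\mathcal{P}^\delta$. Hence at least one literal of $C$ is satisfied, and the assignment is a model of $\cal F$.

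The main obstacle will be establishing the rigidity of the propagation at every type of ``non-straight'' location: the corners of the hexagon, the $120^\circ$ bends of the connection gadgets, the attachment of a connection gadget to a variable gadget in both the positive and negative variants, and the insertions of stacks used to absorb length discrepancies (Fig.~\ref{fig:pla_emb}b). In each case one has to check that among the points of each disk at distance $\geq \delta$ from its prescribed neighbours, only those shown in the figures survive, so that a single binary choice propagates faithfully. These are finitely many elementary distance computations of the same flavour as the one carried out in the proof of Lemma~\ref{lem:stack}, but they must be enumerated carefully, since the construction in Lemma~\ref{lem:fin-cons} uses each configuration in several symmetric variants.
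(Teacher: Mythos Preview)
Your proposal is correct and follows essentially the same two-implication structure as the paper, relying on the same propagation mechanism through stacks, chains, bends, and the variable--connection--clause interfaces.

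The one noteworthy organizational difference is in the backward direction. The paper reads the truth assignment off the \emph{clause} gadgets (for each clause, some red disk is not at $f_{R_\tau}$; set the corresponding variable accordingly) and must then argue separately that no variable receives both $T$ and $F$, which it does by propagating from the clause disk back to the variable gadget and invoking Lemma~\ref{lem:var_two_states}. You instead read the assignment directly from the \emph{variable} gadgets via Lemma~\ref{lem:var_two_states}, which makes well-definedness immediate, and then derive the clause contradiction by propagating forward. Both routes use the same propagation lemma and the same local rigidity facts; yours avoids the extra consistency paragraph. One small inaccuracy: the placement of the central blue disk in a satisfied clause gadget is not in general \emph{unique}---it suffices that some such point exists, as Fig.~\ref{fig:clause-gadget}b,c illustrate.
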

 
\begin{proof} 
Suppose that $\cal F$ has $n$ variables $x_1,x_2,\ldots,x_n$ and $m$ clauses $C_1,C_2,\ldots,C_m$.

 Let us first assume that $\cal F$ is satisfiable. Thus, every clause contains a literal whose truth value is $T$. We describe a realization of ${\cal R}^{\cal F}$ in $\mathcal{P}^\delta$. 
 
 Let $C_j$ be a clause and let $\ell_i$ be a literal making the clause true. Then we pick $t_{R_i}$ from the associated disk $R_i$ in the clause.
 This choice propagates through the connection gadget and eventually forces one of the realizations for the variable gadget described in Lemma~\ref{lem:var_two_states}: the one choosing the rightmost points of the disks, if $\ell_i$'s truth value is $T$ and it appears in the clause positively; or the one choosing the leftmost points of the disks, if $\ell_i$'s truth value is $F$ and it appears in the clause negatively. Notice that a variable might make true several clauses, but the obtained realizations of the variable gadget are consistent with each other.
 
Let us consider a variable whose realization has been fixed in the paragraph above. If its truth value is $T$ and it appears in a clause negatively, or if its truth value is $F$ and it appears in a clause positively, the placement described above forces the realizations of the corresponding connection gadget and associated disk in the clause gadget. 

For the variables whose realization has not been fixed yet (if any), we pick, say, the realization choosing the leftmost points of the disks. For their connection gadgets to the clauses, we pick the realization choosing the top-most (respectively, bottom-most) points of the disks, if the clause is above (respectively, below) the variable. The described realizations belong to $\mathcal{P}^\delta$.

After following the steps above, for each clause gadget at least one of the three red disks has its representative at the position $t_{R_\tau}$. In consequence, it is possible to find a realization of the blue disk such that the realization of the clause gadget is in $\mathcal{P}^\delta$. 

Notice that we have described realizations for all gadgets in the construction. Clearly, the global realization is in $\mathcal{P}^\delta$.
 
We next prove the second implication.
Suppose that there exists a realization $R$ of ${\cal R}^{\cal F}$ in $\mathcal{P}^\delta$. 
 Let $C_j$ be a clause. Since the realization of the clause is in $\mathcal{P}^\delta$, at least one of the red disks $R_i$ of the clause does not have the representative at the position $f_{R_i}$. If the corresponding literal appears positive in the clause, we set the associated variable to $T$. Otherwise, we set it to $F$. In this way, we ensure that every clause is true, but we need to argue that we did not assign $T$ and $F$ simultaneously to the same variable due to two distinct clauses.
 
 Since in the connection gadget there is a stack of disks next to $R_i$, the fact that $R_i$ does not have the representative at the position $f_{R_i}$ forces the choice of the representative of the blue disk in the stack. The choice propagates to the connection gadget and eventually to the variable gadget: If the literal associated to $R_i$ appears positive in $C_j$, the variable has been set to $T$ and the realization in the variable gadget is the one choosing the rightmost points of the disks.  If the literal appears negative, the variable has been set to $F$ and the realization in the variable gadget chooses the leftmost points of the disks. Since in $R$ the realization of the variable gadget is either the rightmost or the leftmost, the variable has either been set to $T$ or to $F$. \hfill $\qed$ \end{proof}

Lemmas~\ref{lem:fin-cons} and~\ref{lem:nph} imply the following:

\begin{theorem}
The problem of finding a largest minimum color spanning circle of $\cal R$ is $\nph$.
\end{theorem}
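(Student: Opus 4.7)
The plan is to assemble the two preceding lemmas into a polynomial-time many-one reduction from planar $\threesat$, which Lichtenstein~\cite{lichtenstein1982planar} showed to be $\nph$. The key conceptual point is that, since the construction uses only two colors (red and blue), a minimum color spanning circle of any realization is precisely a smallest circle whose diameter equals the shortest red-blue distance in that realization. Consequently, a realization achieves an $\mcsc$ of diameter at least $\delta=\tfrac{9}{8}$ if and only if every red-blue pair is at distance at least $\delta$, i.e., the realization belongs to $\mathcal{P}^\delta$.

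With this equivalence in mind, given a planar $\threesat$ formula $\cal F$, I would first invoke Lemma~\ref{lem:fin-cons} to build, in polynomial time, a polynomial-size instance ${\cal R}^{\cal F}$ of the $\lmcsc$ problem. I would then appeal to Lemma~\ref{lem:nph}, which asserts that $\cal F$ is satisfiable if and only if ${\cal R}^{\cal F}$ admits a realization in $\mathcal{P}^\delta$. Chaining the two equivalences, $\cal F$ is satisfiable if and only if the $\lmcsc$ of ${\cal R}^{\cal F}$ has radius at least $\tfrac{\delta}{2}=\tfrac{9}{16}$.

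Hence the decision version of $\lmcsc$ (``is the radius of a largest minimum color spanning circle at least a given threshold?'') is $\nph$, and so the optimization problem is $\nph$ as well. Since both the construction (Lemma~\ref{lem:fin-cons}) and the correctness of the gadgetry (Lemma~\ref{lem:nph}) have already been established, the theorem follows by this short wrap-up; no real obstacle remains beyond noting the two-color/diameter identification and quoting the two lemmas.
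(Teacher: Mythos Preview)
Your proposal is correct and matches the paper's own argument: the theorem is stated immediately after Lemmas~\ref{lem:fin-cons} and~\ref{lem:nph} with the remark that it follows from them, and the two-color observation you make explicit (that the $\mcsc$ diameter equals the minimum red--blue distance, so an $\mcsc$ of diameter $\geq \delta$ is equivalent to membership in $\mathcal{P}^\delta$) is exactly what the paper notes at the start of the reduction.
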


\begin{remark}
Given a yes-instance, we can verify in $O(nk \log n)$ time whether the given realization is correct and the radius of its $\mcsc$ is at least $\delta$. Therefore, the decision version of the problem is  $\npc$.
\end{remark}

\section{Approximation algorithms}

Given that the $\lmcsc$ problem is $\nph$, in this section we turn our attention to approximation algorithms. 

Let $\tilde{r}_{opt}$ denote the radius of a largest possible minimum color spanning circle of ${\cal R}$. We first prove bounds on $\tilde{r}_{opt}$. 

\begin{figure}[t]
 \centering
 \includegraphics{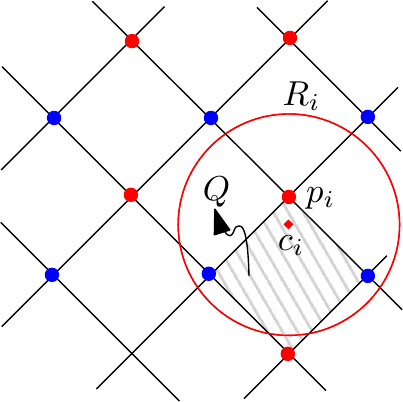}
 \caption{The tilted grid in the proof of Lemma~\ref{lem:low-bd-lmcsc}. We choose $p_i$ as the red corner of $Q$ contained in $R_i$.}\label{fig:lower-bound-radius}
\end{figure}

\begin{lemma} \label{lem:low-bd-lmcsc}
$\tilde{r}_{opt}\geq 1/4$.
\end{lemma}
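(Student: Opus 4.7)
The plan is to exhibit a specific realization of $\mathcal{R}$ whose $\mcsc$ has radius at least $\frac{1}{4}$; since $\tilde{r}_{opt}$ is the supremum of $\mcsc$ radii over all realizations, this immediately yields the lemma. (The case $k=1$ is trivial; I assume $k \geq 2$ in what follows.)

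I would proceed in three steps, following the figure. First, set up an infinite regular square grid $G$ in the plane and $2$-color its vertices in a checkerboard pattern using the colors red and blue; the red vertices form a rotated square sublattice $L_r$ (this is precisely the tilted grid of Fig.~\ref{fig:lower-bound-radius}, with $Q$ denoting a generic cell of $G$). Second, choose the side of $G$ so that $L_r$ satisfies two complementary properties: (i) the covering radius of $L_r$ is at most $\frac{1}{2}$, so that every unit disk $R_i \in \mathcal{R}$ contains at least one red vertex; and (ii) the minimum pairwise distance between two red vertices is at least $\frac{1}{2}$. A tilted grid of side $\frac{\sqrt{2}}{4}$ achieves both conditions (with (ii) an equality), since then $L_r$ is a square sublattice of spacing exactly $\frac{1}{2}$ and covering radius $\frac{\sqrt{2}}{4}<\frac{1}{2}$. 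Third, for each $i$ let $p_i$ be any red vertex lying in $R_i$ (which exists by (i)); this defines a realization of $\mathcal{R}$.

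To finish, I would argue that any $\mcsc$ of this realization must contain representatives of at least two distinct colors (since $k\geq 2$); call two such representatives $p_i$ and $p_j$. Both lie in $L_r$, so by (ii) we have $d(p_i,p_j)\geq \frac{1}{2}$. Any circle containing two points at distance $\geq \frac{1}{2}$ has diameter $\geq \frac{1}{2}$ and hence radius $\geq \frac{1}{4}$, which establishes $\tilde{r}_{opt}\geq \frac{1}{4}$.

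The only obstacle is simultaneously meeting the covering condition (i) and the packing condition (ii) for the red sublattice, but this amounts to a one-parameter choice of grid spacing; once the right scale is fixed, the rest is a direct consequence of being constrained to a discrete point set of minimum separation $\frac{1}{2}$.
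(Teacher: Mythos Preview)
Your argument has a genuine gap at the final step. You place every representative $p_i$ on the red sublattice $L_r$, \emph{regardless of the color of the disk $R_i$}, and then conclude that any two representatives $p_i,p_j$ of distinct colors satisfy $d(p_i,p_j)\geq \frac{1}{2}$ because both lie in $L_r$. But property~(ii) only guarantees this when $p_i$ and $p_j$ are \emph{distinct} points of $L_r$. Nothing in your construction prevents two overlapping disks of different colors from selecting the \emph{same} red vertex (``let $p_i$ be any red vertex lying in $R_i$'' imposes no such constraint); in that case $d(p_i,p_j)=0$ and the $\mcsc$ of your realization degenerates. For a concrete failure, take $k$ disks of $k$ pairwise distinct colors all centered at the same point: your rule permits all of them to pick the same vertex of~$L_r$. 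Demanding distinct choices does not obviously rescue the argument either, since a unit disk contains only a bounded number of points of $L_r$ while $k$ can be arbitrarily large.

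This is precisely where the two-coloring of the grid---which you set up but then never use---is needed, and it is what the paper's proof exploits. The paper first reduces to $k=2$: pick any two of the $k$ colors, prove the bound for the disks of those two colors, and extend the realization arbitrarily to the remaining disks (the $\mcsc$ can only grow). After this reduction, disks of one color are sent to red grid vertices and disks of the other color to blue grid vertices. Now any bichromatic pair of representatives lies at a red vertex and a blue vertex, so their distance is at least the minimum red--blue distance on the grid, which is $\frac{1}{2}$; collisions are impossible by construction. The reduction to two colors together with the use of \emph{both} color classes of the grid is the missing idea in your proposal.
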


\begin{proof}
First, we show that it is enough to prove the result for the case where $k=2$: Suppose that $k>2$. We select two of the $k$ colors, say red and blue, and the set $\cal R'\subseteq \cal R$ of red and blue disks. Suppose that the radius of any $\lmcsc$ of $\cal R'$ is greater than or equal to $1/4$. Then there exists a realization of $\cal R'$ whose $\mcsc$ has radius greater than or equal to $1/4$. We complete this realization of $\cal R'$ to a realization of $\cal R$ by choosing any point in each disk of the remaining $k-2$ colors. Since any $\mcsc$ of this realization contains at least one red and one blue point, its radius is greater than or equal to $1/4$.

 Therefore, we only show that the bound holds when $k=2$ by providing a realization $P$ whose $\mcsc$ achieves the bound. Consider a regular square grid rotated by $\pi/4$ such that the side of every cell of the grid has length $1/2$. We color the corners of the cells in red or blue in such a way that all corners lying in some vertical line are colored red, all corners lying in the next vertical line are colored blue, and so on (see Fig.~\ref{fig:lower-bound-radius}). Now let $R_i\in \cal R$ have red color, and let $Q$ be the cell of the grid containing the center of $R_i$ (if the center of $R_i$ lies on an edge or vertex of the grid, we assign it to any of the adjacent cells). Notice that at least one of the two red corners of $Q$ lies inside $R_i$. We choose such a corner as $p_i\in P$. Similarly, for every $R_j\in \cal R$ of blue color, $P$ contains one of the blue corners of a cell containing the center of $R_j$. We obtain that $P$ is a subset of the grid corners. Since the distance between any pair of red and blue corners is at least $1/2$, the radius of any  $\mcsc$ is at least $1/4$. \hfill $\qed$ 
 \end{proof}

Next, we establish a relation between $\tilde{r}_{opt}$ and $r_c$. We recall that $r_c$ is the radius of a $\mcsc$ (denoted $C_c$) of the colored set $\cal C$ containing the center points of the disks in $\cal R$.

 \begin{lemma}\label{lem:apx}
$\tilde{r}_{opt} \leq r_c + \frac{1}{2}$.
\end{lemma}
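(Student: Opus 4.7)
The plan is to show that for every realization $P$ of $\cal R$, the minimum color spanning circle of $P$ has radius at most $r_c+\frac{1}{2}$; taking a supremum over realizations then yields $\tilde{r}_{opt}\leq r_c+\frac{1}{2}$. The argument mirrors (in the ``opposite direction'') the one used in Lemma~\ref{lem:equi_radii}: we inflate $C_c$ by $\frac{1}{2}$ rather than shrinking it.

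More precisely, I would fix an arbitrary realization $P=\{p_1,\dots,p_n\}$ of $\cal R$ and consider the circle $C'_c$ concentric with $C_c$ of radius $r_c+\frac{1}{2}$. Since $C_c$ is a color spanning circle of the centers $\cal C$, for every color $j\in\{1,\dots,k\}$ there exists some index $i$ with color $j$ such that $c_i\in C_c$. The corresponding point $p_i\in P$ lies in $R_i$, so $d(p_i,c_i)\leq \frac{1}{2}$ (recall that the disks have diameter $1$, hence radius $\frac{1}{2}$). By the triangle inequality, the distance from $p_i$ to the center of $C_c$ is at most $r_c+\frac{1}{2}$, which means $p_i\in C'_c$. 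Thus $C'_c$ is a color spanning circle for $P$, and consequently the $\mcsc$ of $P$ has radius at most $r_c+\frac{1}{2}$.

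Since the above bound holds uniformly over all realizations of $\cal R$, in particular it holds for a realization whose $\mcsc$ attains the maximum radius $\tilde{r}_{opt}$, giving $\tilde{r}_{opt}\leq r_c+\frac{1}{2}$. There is no real obstacle here; the only thing to watch is the convention on the size of the disks (diameter $1$, radius $\frac{1}{2}$), which determines the correct additive term $\frac{1}{2}$ in the statement, and the observation that we may pick, for each color, an index whose center lies inside $C_c$ — a direct consequence of $C_c$ being color spanning for $\cal C$.
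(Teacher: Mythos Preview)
Your proof is correct and follows essentially the same approach as the paper: both inflate $C_c$ by $\tfrac{1}{2}$ and observe (via the triangle inequality) that the enlarged circle contains, for every color, the full disk whose center lies in $C_c$, hence any representative of that disk. The paper phrases this a bit more tersely by noting that the circle of radius $r_c+\tfrac{1}{2}$ contains the $k$ disks whose centers lie in $C_c$, but the content is identical.
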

\begin{proof}
Without loss of generality we can assume that $C_c$ contains the centers of the disks in ${\cal R'}=\{R_1,R_2,\ldots,R_k\}$. A circle of radius $r_c + \frac{1}{2}$, concentric with $C_c$,  contains the $k$ disks in ${\cal R'}$.
Therefore, for any placement of points inside disks of ${\cal R'}$, we can always get a color spanning circle with radius at most $r_c + \frac{1}{2}$. Thus, $\tilde{r}_{opt}\leq r_c + \frac{1}{2}$. \hfill $\qed$ 
\end{proof}

We use the bounds above to design a simple approximation algorithm, presented in Algorithm~\ref{algo:3-approx-alg}.
Let $P^g$ denote the realization of $\cal R$ described in the proof of Lemma~\ref{lem:low-bd-lmcsc}.\\ 

\begin{tcolorbox}[rightrule=2pt,toprule=0.5pt, bottomrule=2pt,leftrule=0.5pt,arc=8pt]
\captionof{algorithm}{$\frac{1}{3}$-factor approximation algorithm for the $\lmcsc$ problem} \label{algo:3-approx-alg}
\hrulefill\vskip -2mm
\begin{algorithmic}[1]
\Require {\it A set ${\cal R}$ of $n$ unit disks}
\Ensure {\it A $\mcsc$ of a realization of $\cal R$ with radius at least $\tilde{r}_{opt}/3$}
\State compute $C_c$;
\If {$r_c\geq 1/4$}
    \State \Return $C_c$;
\Else
    \State \Return a $\mcsc$ of $P^g$;
\EndIf
\end{algorithmic}
\end{tcolorbox} 

It only remains to prove that Algorithm~\ref{algo:3-approx-alg} indeed gives a $\frac{1}{3}$-factor approximation:

\begin{theorem}\label{thm:approx-alg}
A $\frac{1}{3}$-factor approximation for the $\lmcsc$ problem can be computed in $O(nk\log n)$ time. If no two distinct colored unit disks of~$\cal R$ intersect, the approximation factor becomes $\frac{1}{2}$.
\end{theorem}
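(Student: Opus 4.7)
The plan is to analyze Algorithm~\ref{algo:3-approx-alg} via a case split on whether $r_c \geq 1/4$, and in each case exhibit a realization of $\cal R$ whose $\mcsc$ is the one returned, and then bound its radius from below in terms of $\tilde{r}_{opt}$ using Lemma~\ref{lem:apx}. The first observation I would make is that the realization placing every point at the center of its disk yields $C_c$ as its $\mcsc$; hence returning $C_c$ is always a valid output corresponding to a legitimate realization.

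For the case $r_c \geq 1/4$, I would combine this lower bound on $r_c$ with Lemma~\ref{lem:apx} to write
\[
\tilde{r}_{opt} \;\leq\; r_c + \tfrac{1}{2} \;\leq\; r_c + 2 r_c \;=\; 3 r_c,
\]
so the radius $r_c$ of the returned circle is at least $\tilde{r}_{opt}/3$. For the case $r_c < 1/4$, Algorithm~\ref{algo:3-approx-alg} returns a $\mcsc$ of the realization $P^g$ from Lemma~\ref{lem:low-bd-lmcsc}, whose radius is at least $1/4$. On the other hand, Lemma~\ref{lem:apx} yields $\tilde{r}_{opt} \leq r_c + \tfrac{1}{2} < \tfrac{3}{4}$, so the returned radius is at least $(1/4)/(3/4) = 1/3$ of $\tilde{r}_{opt}$. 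This establishes the $\tfrac{1}{3}$-factor.

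For the second statement, I would observe that if no two disks of distinct color intersect, then for any pair of distinct-color centers $c_i, c_j \in \cal C$ we have $d(c_i, c_j) \geq 1$ (since each disk has diameter $1$). In particular, any $\mcsc$ of $\cal C$ contains at least two points at mutual distance $\geq 1$, hence its diameter is $\geq 1$ and $r_c \geq 1/2$. Therefore the algorithm falls into the first branch and returns $C_c$ with radius $r_c$, and Lemma~\ref{lem:apx} now gives
\[
\tilde{r}_{opt} \;\leq\; r_c + \tfrac{1}{2} \;\leq\; r_c + r_c \;=\; 2 r_c,
\]
yielding the $\tfrac{1}{2}$-factor.

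Finally, the running time is governed by the computation of $C_c$, which by the algorithm of Huttenlocher et al.~\cite{huttenlocher1993upper} takes $O(nk \log n)$ time; the construction of $P^g$ and the evaluation of its $\mcsc$ are no more expensive. No step appears to present a genuine obstacle; the only point needing a little care is verifying that the circle returned in the first case corresponds to a bona fide realization of $\cal R$, which is immediate from the ``points-at-centers'' choice.
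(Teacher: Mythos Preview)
Your proposal is correct and follows essentially the same approach as the paper's own proof: the same case split on $r_c \gtrless 1/4$, the same use of Lemmas~\ref{lem:low-bd-lmcsc} and~\ref{lem:apx}, and the same argument for the $\tfrac{1}{2}$-factor via $r_c \geq 1/2$. You supply slightly more detail than the paper (the explicit chains $r_c + \tfrac{1}{2} \leq 3r_c$ and $r_c + \tfrac{1}{2} \leq 2r_c$, the justification that $C_c$ is the $\mcsc$ of the points-at-centers realization, and why $r_c \geq 1/2$ under the non-intersection hypothesis), but the structure is identical.
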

  \begin{proof}
If $r_c\geq 1/4$, then, by Lemma~\ref{lem:apx}, $\tilde{r}_{opt}\leq 3r_c$ and $C_c$ gives a $\frac{1}{3}$-approximation for the problem. If $r_c< 1/4$, then, by Lemma~\ref{lem:apx}, $\tilde{r}_{opt}<3/4$. By Lemma~\ref{lem:low-bd-lmcsc}, any $\mcsc$ of $P^g$ has radius at least $1/4$, so such a circle gives a $\frac{1}{3}$-approximation for the problem.

Computing $P^g$ takes $O(n)$ time, and computing a $\mcsc$ of $P^g$ or the set of disk centers can be done in $O(nk\log n)$ time.

Finally, if no two distinct colored disks intersect, $r_c\geq \frac{1}{2}$. This, combined with Lemma~\ref{lem:apx}, gives $r_c\geq \frac{\tilde{r}_{opt}}{2}$. \hfill $\qed$ 
\end{proof}

 


Since Lemma~\ref{lem:low-bd-lmcsc} is one of the key ingredients of our approximation algorithm, we conclude this section by showing that it is the best possible.

\begin{figure}[t]
 \centering
\includegraphics{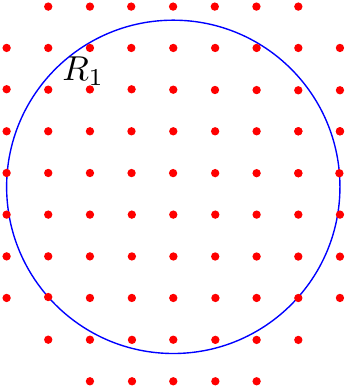}
 \caption{ Construction in the proof of Remark~\ref{rem:upp-bd-lmcsc}. For the sake of clarity, the red disks centered at the grid points are not displayed.}\label{fig:lower-bound-radiusr}
\end{figure}

\begin{remark} \label{rem:upp-bd-lmcsc}
Given any $\varepsilon>0$, there exists a set ${\cal R}$ of unit disks for which $\tilde{r}_{opt}<1/4+\varepsilon$.
\end{remark}

\begin{proof}
We start with a blue disk $R_1$. We overlay a square grid of red points over the area covered by $R_1$ enlarged a little around the boundary (see Fig.~\ref{fig:lower-bound-radiusr}). The length of the sides of the grid cells is set to $2\sqrt{2}\varepsilon$. Then, for every red point of the grid, we place a red disk centered at that point. Finally, we place the remaining blue disks far away from the construction.

Let $p_1$ be the representative for $R_1$ in a realization $P$ giving a minimum color spanning circle of radius $\tilde{r}_{opt}$. Since $p_1$ lies inside some cell of the grid, there exists some red grid point $q$ such that $d(p_1,q)\leq 2\varepsilon$. Let $R_j$ be the red disk centered at $q$, and $p_j$ be its representative in the  realization $P$. Then, $d(p_1,p_j)\leq d(p_1,q)+d(q,p_j)\leq 2\varepsilon+1/2$. Since $p_1,p_j$ is a blue-red pair at distance at most $1/2+2\varepsilon$, the radius of any minimum color spanning circle is at most $1/4+\varepsilon$. \hfill $\qed$ 
\end{proof}

\section{Open problems}

Naturally, the main open problems are related to the $\lmcsc$ problem and the existence of better approximation algorithms for it. So far we have not succeeded in finding a PTAS for this problem, and we suspect that the problem might be $\apx$. If this is the case, it would also be interesting to improve the approximation factor of our approximation algorithm.

\paragraph{Acknowledgements}
The authors would like to thank Irina Kostitsyna for key discussions on the hardness reduction and Hans Raj Tiwary for the proof of Lemma~\ref{lem:low-bd-lmcsc}. A.A., R.J., V.K., and M. S. were supported by the Czech Science Foundation, grant number GJ19-06792Y, and by institutional support RVO: 67985807.  M.L. was partially supported by the Netherlands Organization for Scientific Research (NWO) under project no. 614.001.504. This project has received funding from the European Union's Horizon 2020 research and innovation programme under the Marie Sk\l{}odowska-Curie grant agreement No 734922.

\bibliographystyle{splncs04}
\bibliography{mcsc_full.bib}
\end{document}